\documentclass{cta-author}

\usepackage[colorlinks=true,allcolors=blue,bookmarks=false,hypertexnames=true]{hyperref} 
\usepackage{dsfont}
\usepackage{mathptmx}
\newcommand\T{{ \mathrm{\scriptstyle T} }}
\DeclareRobustCommand{\rchi}{{\mathpalette\irchi\relax}}
\newcommand{\irchi}[2]{\raisebox{\depth}{\footnotesize{$#1\chi$}}}
\newcommand\x{\rchi}

\setlength{\bibsep}{0.0pt}
\newtheorem{Proposition}{Proposition}{}
{}
{}

\usepackage{algorithm}
\usepackage{algorithmic}

\begin{document}

\supertitle{}

\title{\large Bi-Smoothed Functional Independent Component Analysis for EEG Artifact Removal}

\author{\au{Marc Vidal$^{1,2\corr}$}, \au{Mattia Rosso$^{1}$}, \au{Ana M. Aguilera$^{2\corr}$}}

\address{\add{1}{Institute of Psychoacoustics and Electronic Music (IPEM), Ghent University, Belgium}
\add{2}{Department of Statistics and O.R. and IMAG, University of Granada, Spain}
\email{aaguiler@ugr.es (A.M.A); marc.vidalbadia@ugent.be (M.V)}}

\begin{abstract}
Motivated by mapping adverse artifactual events caused by body movements in electroencephalographic (EEG) signals, we present a functional independent component analysis based on the spectral decomposition of the kurtosis operator of a smoothed principal component expansion. A discrete roughness penalty is introduced in the orthonormality constraint of the covariance eigenfunctions in order to obtain the smoothed basis for the proposed independent component model. To select the tuning parameters, a cross-validation method that incorporates shrinkage is used to enhance the performance on functional representations with large basis dimension. This method provides an estimation strategy to determine the penalty parameter and the optimal number of components. Our independent component approach is applied to real EEG data to estimate genuine brain potentials from a contaminated signal. As a result, it is possible to control high-frequency remnants of neural origin overlapping artifactual sources to optimize their removal from the signal. An \texttt{R} package implementing our methods is available at CRAN. 
\\
\\
\textit{Keywords}: functional data; functional kurtosis; penalized splines; smoothed principal components; auditory-motor coupling task; EEG; motion artifacts.
\end{abstract}

\maketitle

\section{Introduction}\label{sec1}

In the field of neurophysiology, electroencephalography (EEG) represents one of the few techniques providing a direct measure of bioelectrical brain activity, as oscillations in excitability of populations of cortical pyramidal cells \cite{Wang10} contribute to variations in the electrical potentials over the scalp. Oscillations are characterized by dominant intrinsic rhythms conventionally grouped into frequency bands, which are by now validated as markers of several neurocognitive phenomena \cite{Buzsaki06}. However, despite the temporal resolution achievable with its high sampling rate, EEG is a technique that suffers from low signal-to-noise ratio. This is mainly due to the fact that the layers of tissue dividing the electrodes from the cortex act as a natural filter attenuating genuine brain activity, resulting in a combination of cortical and artifactual sources in the EEG signal. In addition, the dominant brain-related spectral features often overlap with artifactual activity in higher frequency bands \cite{Castellanos06}, and particularly at lower frequencies most of the variance in the signal is explained by physiological sources outside the brain. For these reasons, analyzing EEG signals can ultimately be viewed as solving a source-separation problem with the goal of estimating brain potentials of interest. 

Blind source separation techniques such as independent component analysis (ICA) are commonly used to address artifact detection and correction of EEG signals. The term ICA encompasses a broad scope of algorithms and theoretical rudiments aligned to the assumption of independence of the latent sources in the data. From the statistical perspective, it could be regarded as a refinement of principal component analysis (PCA) that goes beyond the variance patterns of the data, introducing high-order measures such as kurtosis or negentropy to get more interpretable outcomes. This way, the data can be approximately represented in terms of a small set of independent variables, while in the PCA reduction, these variables are only assumed to be uncorrelated. An overview of statistical methodologies for ICA is provided in \cite{Nordhausen18}.  A comprehensive monograph of the subject can be found in \cite{Hyvarinen01}.

The use of sampling units in form of functions that evolve on a continuum, rather than through vectors of measurements, has been popularized over the last two decades to solve a broad class of problems. Functional data analysis provides a natural generalization for a wide variety  of statistical techniques that take advantage of the complete functional form of data by including relevant information related to smoothness and derivability (see \cite{Ramsay05,Hsing15,Wang16} for a systematic review of the topic).  The extension of ICA to functional data has, however, not yet received the attention nor the prolific developments of other reduction techniques in this framework, such as functional principal component analysis (FPCA).  A first attempt to develop an extension of the classic multivariate ICA  model was investigated in \citep{Mehta2009} by exploiting the functional principal component decomposition. Functional ICA techniques were also implemented in \cite{Pena14}, who defined the kurtosis operator of a standardized sample in an approximation to a separable infinite-dimensional Hilbert space. Under this setting, the 
kurtosis eigenfunctions
are expected to be rougher as the space 
does not contain functions that are pointwise convergent. Their approach focuses on the classification properties of the kurtosis operator, whose decomposition is assumed to have a similar form to the Fisher discriminant function. More recently, \cite{Li16,Virta20} developed a functional ICA model using an estimation procedure stemmed from the finite Karhunen-Lo\`{e}ve (K-L) expansion \cite[pp. 37]{Ash75}, which is a less rough space since its orthogonal expansion is optimal in the least-squared error sense. We extend this model setup 
endowing the space with a new geometrical structure given by a Sobolev inner product in order to control the roughness of the K-L functions.

The use of functional data in brain imaging analysis has gained notoriety in the last years, despite the complexity and computational cost arisen in its treatment. Data acquired from an electroencephalogram might elicit a wide variety of functional data methods, going from the estimation of smoothed sample curves to more advanced reduction and forecast techniques. See, for example, \cite{Xiao16,Hasenstab17,Nie18,Pokora18,Scheffler2018}. Current research is mainly focused on functional principal component approaches for modelling data free of artifactual sources. However, the efficiency of functional ICA techniques used in stages where data is contaminated by physiological artifacts remains, to the best of our knowledge, untested. In contrast, this problem has been extensively addressed in the multivariate environment; \cite{Uriguen15} compares the state-of-the-art methods for artifact removal. 

In this paper, a methodology based on piecewise polynomial smoothing (B-splines) is developed to disentangle the overlap between neural activity and artifactual sources.  Because of the transient nature and the complex morphology of EEG data, B-splines provide a good alternative to represent the non-sinusoidal behaviour of the neural oscillatory phenomena due to its well-behaved local smoothing. The goal is to use the proposed smoothed functional ICA to get more accurate brain estimates by subtracting artifacts free of noise. While for a strictly different kind of data, wavelet-based approaches or hybrid settings combining wavelet with ICA have been demonstrated to perform well at denoising common artifacts (see, e.g., \cite{Castellanos06, Akhtar12,Mammone14,Baja20}). 
By contrast, and despite the obvious differences between both data kinds, our independent component estimation is based on a penalized spline (P-spline) approach \citep{Eilers96,Durban02} that has a lower computational cost and is mathematically simpler. P-splines have been successfully applied for dimension reduction \cite{Aguilera13} as well as for the estimation of different functional regression models  \cite{Agui2013,Agui2016,Agui2017,Agui2020}.

Nevertheless, what characterizes our method is that the decomposition is naturally regulated by the principal component eigendirections and optimized by penalized estimators. Contrarily, in using the wavelet approaches, this is decided on the basis of the frequency band features of the data or the components. For this reason, the proposed functional ICA can be conceived as a bi-smoothed estimation procedure. The end-user will finally appreciate how artifact extraction can be fine-tuned by regulating a single smoothing parameter, making it intuitive to improve the results through a visual inspection of the independent component scores.

The paper is organized as follows.  We introduce our model in Section \hyperref[sec2]{2} and develop the smoothed FICA decomposition using basis expansion representations of functional data in Section \hyperref[sec3]{3}. A method for selecting the tuning parameters is discussed in  Section \hyperref[sec4]{4}. To test the effectiveness of our model in recovering brain signals, Section \hyperref[sec5]{5} provides a simulation using real EEG data on single trial designs containing stereotyped artifacts. 
Section \hyperref[sec6]{6} shows how our smoothed FICA works in the context of event-related potentials designs. Finally, we conclude with a brief discussion in Section \hyperref[sec7]{7}. The presented P-spline smoothed FICA is implemented in the \texttt{R} package  \textit{pfica} \citep{Vidal20}.

\section{Smoothed functional independent component analysis}
\label{sec2}
\subsection{Preliminaries}
Let $y_{i}=(y_{i1},\ldots,y_{im_{i}})^\T$ be a  signal of $i,(i=1,\ldots,n)$ components digitized at the sampling points $t_{ik},\left(k=1,\ldots,m_{i}\right)$. Consider that the sample data is observed with error, so that it can be modeled as 
\begin{equation}\label{modelofuncional}
y_{ik}=x_{i}(t_{ik})+\varepsilon_{ik},
\end{equation}
where $x_{i}$ is the $i$th functional trajectory of the signal and $\varepsilon_{ik}$ mutually independent measurement errors with zero means. The sample functions $x_{1},\dots,x_{n}$ are assumed to be realizations of independent and identically distributed copies of a random functional variable  $X$ in $L^{2}(T)$, the separable Hilbert space of square integrable functions from $T$ to $\mathbb{R}$, endowed with the usual inner product $\langle f,g\rangle=\int_{T}f(t)g(t)\mathrm{d}t,$ and the induced norm $\|f\|=\langle f,f\rangle^{1/2}$. Thorough the text, $X$ is assumed to have zero mean and finite fourth moments, which implies that  higher order operators are well defined. 

For $s,t\in T$,  the sample covariance operator $\mathcal{C}_x$ is an integral operator with kernel $c(s,t)=n^{-1}\sum_{i=1}^{n}x_{i}(s)x_{i}(t)$ admitting the Mercer decomposition 
\begin{equation*}
c(s,t)=\sum_{j=1}^\infty \eta_{j}\gamma_{j}(s)\gamma_{j}(t),
\end{equation*}
where $\{ \eta_{j},\gamma_{j}\} _{j}$ is a positive sequence of eigenvalues in descending order and their associated orthonormal eigenfunctions. The functions $x_{i}(t)$ can be approximately represented by a truncated series of the K-L expansion 
\begin{equation} \label{K-L}    x_{i}^{q}(t) = \sum_{j=1}^{q} z_{ij}\gamma_{j}(t), \end{equation} 
where $z_{ij}=\langle x_i,\gamma_{j}\rangle $ are zero mean random variables with var$(z_{j})=\eta_{j}$ and cov$(z_{j},z_{j'})=0$ for $j\neq j'.$ These variables are referred to as the principal components scores and are uncorrelated generalized linear combinations of the functional variable with maximum variance. Moreover, if the $q$ term in (\ref{K-L}) is optimally selected, the mean squared error is minimized, providing the best linear approximation to the original data \citep{Ghanem91} (pp. 21). A functional Varimax rotation has been recently introduced to improve the interpretation of the most explicative principal component scores \citep{Acal2020}. 

\subsection{Functional ICA of a smoothed principal component expansion}
\label{ss22}
The notion of independent components of a random vector cannot be immediately extended to the case of Hilbert-valued random elements (functional data) due to the fact that a probability density function is not generally defined in this context \cite{Delaigle10}. In the sequel, we consider the definition of independence introduced in \cite{Gutch12}, which establishes that a functional random variable has independent components if the coordinates obtained after projecting on to a given orthonormal basis are independent variables. Then, the aim of functional independent component analysis (FICA) is to find a linear operator $\Gamma,$ such that for a truncated orthonormal basis $ \phi_j \ (j=1,\dots, q)$ in $L^2(T)$, the variables $\langle \Gamma X, \phi_j \rangle$ are mutually independent.
By considering $X$ prompted by a Gaussian process, a functional principal component analysis (FPCA) would suffice to obtain the independent components \cite[pp. 40]{Ash75} . However,  as functional data is not inherently of this kind, it is assumed that if $X$ has a finite-dimensional representation, then it can be transformed by the operator $\Gamma$ to achieve the goals of the model. This begs the question of the basis choice for $X$, whereupon the results markedly depend.

In this paper, the sample $x_i$ is approximated by a smoothed functional PCA representation obtained by introducing an orthonormality constraint with respect to the weighted Sobolev inner product
\begin{equation} \label{eqip}
    \left\langle f,g\right\rangle {}_{\lambda} =  \left\langle f,g\right\rangle +
    \lambda \left\langle Rf,Rg\right\rangle,
\end{equation}
where $R$ is an operator with the action $Rf(t)=\mathrm{d}^{2}f(t)/\mathrm{d}t^{2}, f\in \mathrm{dom}(R)$ that measures the roughness of the curves, and $\lambda$ is a non-negative penalty parameter. 
Notice that, when $\lambda=0$, (\ref{eqip}) is simplified to the usual inner product, meaning that $x_i$ can be uniquely represented by the K-L basis, i.e. the eigenfunctions of $\mathcal{C}_x.$ To estimate the smoothed principal components, Silverman \cite{Silverman96} proposed the following variance maximization problem
\begin{equation} \label{eqsv}
\gamma_{\lambda,j}=\mathrm{arg max}\frac{\mathrm{var}(\left\langle \gamma,x\right\rangle)}{||\gamma||^{2}+\lambda\left\langle R\gamma,R\gamma\right\rangle }=\mathrm{max}\frac{\left\langle \gamma,\mathcal{C}_x\gamma\right\rangle }{||\gamma||_{\lambda}^{2}},
\end{equation}
subject to the constraint $\langle\gamma, \gamma_{\lambda,k}\rangle_{\lambda}=0$ for all $k<j,$  where $\gamma$ is a function assumed in a closed linear subspace of $L^2$ with square-integrable second derivatives. We emphasize that, the  problem of finding $\gamma_{\lambda,j}$ depends on the sample size $n$ and the selection of the penalization parameter $\lambda$. In \cite{Qi11}, the authors established the existence of the solutions of the optimization problem (\ref{eqsv}) for any $\lambda\geq0$. Silverman \cite{Silverman96} proved the consistency of the estimators as $n\rightarrow \infty$ and $\lambda\rightarrow 0$. Generalized consistency and asymptotic distributions of the estimators have been derived in \cite{Permantha17}, using expansions of the perturbed eigensystem of a sample smoothed covariance operator.

The functions $\{\gamma_{\lambda,j} \}$ form  a complete orthonormal system in the subspace endowed by $\left\langle \cdot,\cdot\right\rangle _{\lambda}$, making this basis non-compatible for our independent component model in $L^{2}(T).$ 
However, \cite{Ocana99} generalized Silverman's  method providing the following equivalents functional PCA.

\begin{Proposition}
Given a sample $\{ x_i \}$ of a functional variable with trajectories in $L^{2}(T),$ there exists a positive definite  operator $S^2$ such that the following PCA decompositions are equivalent:   
\begin{enumerate}
	\item The FPCA of $S^2(x_i)$ with respect to $\langle \cdot,\cdot\rangle_\lambda$,
	$S^2(x_i) = \sum_j z_{ij} \gamma_{\lambda, j}.$
	\item The FPCA of $S(x_i)$ with respect to $\langle \cdot,\cdot\rangle$,
	 $S(x_i) = \sum_j z_{ij} S^{-1}(\gamma_{\lambda, j}).$
	\item The FPCA of $X$ with respect to $\langle \cdot,\cdot\rangle_{S}$,
	$x_i = \sum_ j z_{ij} S^{-2} (\gamma_{\lambda, j}),$\\
	with  $\langle f,g\rangle_{S} = \langle S(f), S(g)\rangle = \langle S^2 (f), S^2(g)\rangle_\lambda.$
\end{enumerate}
\label{prop1} 
\end{Proposition}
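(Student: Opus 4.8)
The plan is to realize the roughness penalty as a bounded perturbation of the identity and then build $S^2$ as its inverse. First I would write the Sobolev inner product in operator form: under the boundary conditions that render the penalty self-adjoint one has $\langle Rf,Rg\rangle = \langle R^{*}Rf,g\rangle$, so that $\langle f,g\rangle_{\lambda} = \langle(I+\lambda R^{*}R)f,g\rangle$. Writing $\Psi = I+\lambda R^{*}R$, the operator $\Psi$ is self-adjoint and satisfies $\Psi \succeq I$, hence it is positive definite and boundedly invertible. I would then define $S^{2} := \Psi^{-1}$, which inherits positive definiteness and self-adjointness, and take $S := \Psi^{-1/2}$ through the functional calculus; this $S$ is the positive definite, self-adjoint square root the statement calls for.

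The central step is to show that decomposition~2 is an ordinary FPCA. In Silverman's quotient (\ref{eqsv}) I would substitute $\gamma = S(u)$. Using self-adjointness of $S$ together with the identity $\Psi S = S^{-1}$, the denominator becomes $\|\gamma\|_{\lambda}^{2} = \langle Su,\Psi Su\rangle = \|u\|^{2}$, while the numerator becomes $\mathrm{var}(\langle Su,X\rangle) = \mathrm{var}(\langle u,SX\rangle)$. Thus maximizing Silverman's quotient over $\gamma$ coincides with the standard FPCA of the transformed sample $S(x_{i})$ over $u$, whose eigenfunctions are $u_{j} = S^{-1}(\gamma_{\lambda,j})$. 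A short computation confirms these are $\langle\cdot,\cdot\rangle$-orthonormal, since $\langle S^{-1}\gamma_{\lambda,j}, S^{-1}\gamma_{\lambda,k}\rangle = \langle\gamma_{\lambda,j},\Psi\gamma_{\lambda,k}\rangle = \langle\gamma_{\lambda,j},\gamma_{\lambda,k}\rangle_{\lambda} = \delta_{jk}$, which yields decomposition~2 with scores $z_{ij} = \langle S(x_{i}),u_{j}\rangle = \langle x_{i},\gamma_{\lambda,j}\rangle$.

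From this single representation the other two follow by applying $S$ and $S^{-1}$. Applying $S$ to decomposition~2 gives $S^{2}(x_{i}) = \sum_{j} z_{ij}\gamma_{\lambda,j}$, and the functions $\{\gamma_{\lambda,j}\}$ are $\langle\cdot,\cdot\rangle_{\lambda}$-orthonormal by construction of Silverman's basis, which is decomposition~1. Applying $S^{-1}$ gives $x_{i} = \sum_{j} z_{ij} S^{-2}(\gamma_{\lambda,j})$; here I would check that $\{S^{-2}\gamma_{\lambda,j}\}$ is $\langle\cdot,\cdot\rangle_{S}$-orthonormal by the same manipulation, namely $\langle S^{-2}\gamma_{\lambda,j}, S^{-2}\gamma_{\lambda,k}\rangle_{S} = \langle S^{-1}\gamma_{\lambda,j}, S^{-1}\gamma_{\lambda,k}\rangle = \delta_{jk}$, giving decomposition~3. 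The inner-product chain $\langle f,g\rangle_{S} = \langle Sf,Sg\rangle = \langle S^{2}f,S^{2}g\rangle_{\lambda}$ then reduces, via self-adjointness and $S^{2}=\Psi^{-1}$, to the single identity $\Psi S^{2} = I$; and recomputing the scores in each geometry confirms that they all collapse to the common value $\langle x_{i},\gamma_{\lambda,j}\rangle$, so the three analyses share the same scores and differ only by the ambient inner product.

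I expect the genuine difficulty to be analytic rather than algebraic, and to lie entirely in the first paragraph: in justifying that $\Psi = I+\lambda R^{*}R$ is a well-defined self-adjoint operator with $\langle Rf,Rg\rangle = \langle R^{*}Rf,g\rangle$ and a boundedly invertible, positive square root on $L^{2}(T)$. This requires specifying the domain of the unbounded second-derivative operator $R$ and the boundary conditions under which the penalty is self-adjoint, after which $S=\Psi^{-1/2}$ exists by the spectral theorem. Once $S$ is available as a bounded, self-adjoint, positive definite operator, every subsequent equivalence is a routine consequence of moving $S$ across the relevant inner product.
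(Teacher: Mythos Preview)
The paper does not actually prove Proposition~\ref{prop1}: it is stated as a result taken from Oca\~na et al.~(1999) and used without argument. So there is no ``paper's own proof'' to compare against at the level of Proposition~1 itself.

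That said, your construction is the right one and is corroborated later in the paper. Your choice $S^{2}=\Psi^{-1}$ with $\Psi=I+\lambda R^{*}R$ is exactly what the paper recovers in the finite-dimensional B-spline setting in Proposition~\ref{Prop2}, where it shows $S^{2}(f)=\phi^{\T}(\mathcal{G}+\lambda P_{d})^{-1}\mathcal{G}\,\mathtt{f}$; this matrix is precisely the coordinate representation of $(I+\lambda R^{*}R)^{-1}$ once one writes $\langle f,g\rangle=\mathtt{f}^{\T}\mathcal{G}\mathtt{g}$ and $\langle f,g\rangle_{\lambda}=\mathtt{f}^{\T}(\mathcal{G}+\lambda P_{d})\mathtt{g}$. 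Your change of variables $\gamma=Su$ turning Silverman's penalized quotient into the ordinary Rayleigh quotient for $S(x_i)$, the verification that the scores collapse to $\langle x_i,\gamma_{\lambda,j}\rangle$, and the orthonormality checks in the three geometries are all correct and constitute a clean self-contained proof.

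The only genuine caveat is the one you flag yourself: on $L^{2}(T)$ the operator $R$ is unbounded, so the identity $\langle Rf,Rg\rangle=\langle R^{*}Rf,g\rangle$ and the spectral construction of $\Psi^{-1/2}$ require a domain/boundary specification (or a form-sense treatment of $I+\lambda R^{*}R$). The paper sidesteps this entirely by working from the outset in the finite-dimensional B-spline span, where $\Psi$ is just a positive definite matrix and everything is elementary; the abstract version you sketch would be the content of the cited Oca\~na reference.
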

Therefore, the eigenfunctions of the covariance operator $\mathcal{C}_{Sx}=S\mathcal{C}_{x} S$ of the smoothed sample $S(x_{i})$ are given by  $\beta_j = S^{-1}(\gamma_{\lambda,j}),$ where $\gamma_{\lambda,j}$ are obtained by the penalized estimation procedure set out for (\ref{eqsv}). 
Then, the basis $\beta_j$ is orthonormal with respect to the usual inner product in $L^{2}(T)$, so that the smooth sample data $S(x_{i})$ can be approximated by its truncated K-L expansion
\begin{equation}\label{FK-L}
	\x_{i}^{q}(t)=\sum_{j=1}^{q} z_{ij} \beta_{j},
\end{equation}
where $z_{ij} = \langle \beta_{j},S(x_{i})\rangle = \langle \gamma_{\lambda,j}, x_{i} \rangle,$ and $\x^q_i(t)$ denotes a $q$-dimensional orthonormal representation of the smoothed sample data $S(x_{i})$ in $L^{2}(T).$ The functional ICA version proposed in this paper uses the elements of this expansion to estimate the independent components of the original data. 

Our main assumption facts that the target functions can be found in the space spanned by the first $q$ eigenfunctions of the operator $\mathcal{C}_{Sx}$, as it is endowed with a smooth second-order structure represented by the major modes of variation of the empirical data. Thus, in such eigensubspace, it is expected to gain some accuracy in the forthcoming results due to the attenuation of the higher oscillation modes corresponding to the small eigenvalues of $\mathcal{C}_{Sx}$. Henceforth, we denote by $\mathcal{M}^{q}=\mathrm{span}\{ \beta_{1},\ldots,\beta_{q}\} $ the subspace spanned by the $q$ first eigenfunctions of $\mathcal{C}_{Sx}$. Without loss of generality, $\mathcal{M}^{q}$ will be assumed to preserve the inner product in $L^2(T)$. 

Most of the multivariate ICA methods require the standardization of the observed data with the inverse square root of the covariance matrix in order to remove any linear dependencies and normalize the variance along its dimensions. In infinite-dimensional spaces, however, covariance operators are not invertible giving rise to an ill-posed problem. As long as our signal is represented in $\mathcal{M}^{q}$, no regularization is needed and under moderate conditions, the inverse of the covariance operator can be well defined. Since standardization is a particular case of whitening (or sphering), we can generalize the procedure in the form of a whitening operator $\Psi$ that transforms a function in $\mathcal{M}^{q}$ into a standarized function on the same space. This implies that $\Psi(\x^{q})=\tilde{\x}^{q}$ is a standardized functional sample whose covariance operator  $\mathcal{C}_{\tilde{\x}^{q}} $ naturally satisfies to be the identity inside the space.

As an extension of the multivariate case,  the sample kurtosis operator of the standardized data is usually defined as
\begin{equation}
\mathcal{K}_{\tilde{\x}^{q}}(h)(s)  = \frac{1}{n}\sum_{i=1}^{n}\langle \tilde{\x}_{i}^{q},\tilde{\x}_{i}^{q}\rangle \langle \tilde{\x}_{i}^{q},h\rangle \tilde{\x}_{i}^{q}(s)=\left\langle k(s,\cdot),h\right\rangle,
\label{FFobi}
\end{equation}
where $k(s,t)=n^{-1}\sum_{i=1}^{n}\Vert \tilde{\x}_{i}^{q}\Vert ^{2}\tilde{\x}_{i}^{q}(s)\tilde{\x}_{i}^{q}(t)$ denotes the kurtosis kernel function of $\tilde{\x}^{q},$ and $h$ the function in $\mathcal{M}^{q}$ to be transformed. In the remainder of this article, it is assumed that the kurtosis operator is positive-definite, Hermitian and equivariant (see \cite{Li16}). Again, by Mercer's theorem its kernel admits the eigendecomposition 
\begin{equation*}
k(s,t)=\sum_{l=1}^{q}\rho_{l}\psi_l(s) \psi_{l}(t),\
\end{equation*}
where $\left\{ \rho_{l},\psi_{l}\right\} _{l=1}^q$ is a positive sequence of eigenvalues and related eigenfuntions. With this, we can define the independent components of $\x_{i}^{q}$ as mutually independent  variables with maximum kurtosis
given by
\[
\zeta_{il, \tilde{\x}^q}=\langle \tilde{\x}_{i}^{q},\psi_{l}\rangle.
\]

Challenging questions arise on how the 
Karhunen-Loève Theorem might be applied in this context. Intuitively, we note that this procedure leads to the expansion $\tilde{\x}_{i}^{q}(t) =\sum_{l=1}^{q} \zeta_{il,\tilde{\x}^q} \psi_{l}(t)$ which can be approximated in terms of $r$ eigenfuntions $\psi_{l}$ of interest, e.g. those associated with the independent components with extreme kurtosis values. Under mild conditions, this problem was solved in \cite{Li16,Virta20} by choosing $r=q$. However, there are other possibilities, such as considering $r<q$ or $\{ \psi_{1},\ldots, \psi_q \}$ as a basis of projection for either $x,\x^{q}$ or $\tilde{\x}^q$, in view of the fact that it preserves the four-order structure of the standardized data.

\section{Basis expansion estimation using a P-spline penalty}
\label{sec3}

In order to estimate the independent components from noisy discrete observations in Equation (\ref{modelofuncional}), it will be assumed that the tajectories belong to a finite-dimensional space of $L^2(T)$ spanned by a set of B-spline basis functions $\{\phi_{1}(t), \ldots, \phi_{p}(t)\}.$ Then, each sample curve can be expanded as 
\begin{equation}
	\label{expan}
x_i (t) = \sum_{j=1}^{p}a_{ij}\phi_{j}(t),
\end{equation}  
or, in matrix form, $x = A \phi,$ where $A$ is a coefficient matrix $A= (a_{ij}) \in\mathbb{R}^{n\times p}$ and $\phi=(\phi_{1},\ldots,\phi_{p})^\T$, $x=(x_{1},\dots,x_{n})^\T$ denote vector-valued functions. The basis coefficients for each sample curve  can be found by least squares approximation minimizing 
the mean squared error 
\begin{equation*}
\label{sq}
\textsc{mse}\ensuremath{\left(a_{i} \mid x_{i}\right)}=\ensuremath{\left(x_{i}-\Phi_{i} a_{i}\right)^{\T}\left(x_{i}-\Phi_{i} a_{i}\right)},
\end{equation*}
where $\Phi_{i}=\{ \phi_{j}(t_{ik}) \} \in\mathbb{R}^{m_{i}\times p}$ and $a_{i}=(a_{i1},\ldots,a_{ip})^\T$. For general guidance on both definition knots  and  order of B-splines, we refer
the reader to \cite{Ramsay05}. Although in this paper a non-penalized least squares approximation is assumed, \cite{Aguilera13b} give  a detailed account of how to estimate the basis coefficients  using different roughness penalty approaches (continuous and discrete) in terms of B-splines.

The next step consists of smoothing the sample curves in terms of the smoothed principal components and associated weight functions $\beta_j$ in (\ref{FK-L}). To do so, we next derive the P-spline FPCA approach developed in \cite{Aguilera13} that incorporates a discrete penalty based on $d$-order differences of adjacent B-spline coefficients (P-spline penalty) in the orthonormality constraint. Let us consider the B-spline basis  expansion of the covariance eigenfunctions $\gamma (t)=\phi(t)^\T b$, with $b=(b_{1}, \ldots, b_{p})^\T$ being its vector of basis coefficients, and a discrete P-spline roughness penalty function defined by $\textsc{pen}_{d}(\gamma)=b^\T P_{d}b,$ where $P_{d}\in\mathbb{R}^{p\times p}$ is the penalty matrix $P_{d}=\Delta^\T_{d} \Delta_{d}$, with $\Delta_{d}$ being a matrix representation of the $d$-order difference operator $R.$ Throughout the paper, we assume two order differences defining the penalty function $b^\T P_{2}b=\left(b_{1}-2b_{2}+b_{3}\right)^{2}+\cdots+(b_{p-2}-2b_{p-1}+b_{p})^{2}.$
This way, the inner product in (\ref{eqip}) is given in terms of B-splines expansions as 
\begin{equation*}
\left\langle f,g\right\rangle {}_{\lambda} =  
\mathtt{f}^\T \mathcal{G} \mathtt{g} +\lambda \mathtt{f}^\T P_{2}  \mathtt{g},
\end{equation*}
with $f= \phi^\T \mathtt{f}$, $g= \phi^\T \mathtt{g}$, and $\mathcal{G}=(\langle \phi_{j},\phi_{j'}\rangle) ,(j,j'=1,\ldots,p)$. Then, the maximization problem in (\ref{eqsv}) is equivalent to solve the following matrix problem:
\begin{equation} \label{penalizedsamplevariance}
b_{\lambda,j} =\mathrm{arg max} \frac{b^\T\mathcal{G}\Sigma_A \mathcal{G} b}{b^\T\left(\mathcal{G}+\lambda P_{2} \right)b},
\end{equation}
subject to the constraint $b^\T\left(\mathcal{G}+\lambda P_{2} \right)b_{\lambda,k} = 0$ for all $k<j,$ where $\Sigma_A =n^{-1} A^\T A$ and $\lambda\geq0$ is the penalty parameter used to control the trade-off between maximizing the sample variance and the strength of the penalty.

 Because B-spline basis are non-orthonormal with respect to the usual $L^2$ geometry, we can apply Cholesky factorization of the form $LL^\T=\mathcal{G}+\lambda P_{2} $ in order to find a non-singular matrix that allows us to operate in terms of the B-spline geometrical structure induced into $\mathbb{R}^{q}$. Then, finding the weight coefficients corresponds to solve the eigenvalue problem
\begin{equation} \label{eq:2}
    L^{-1}\mathcal{G} \Sigma_A \mathcal{G} (L^{-1})^\T v_j=\eta_j v_j,
\end{equation}
where $v_j=L^\T b_{\lambda, j}$ and the coefficients of $\gamma_{\lambda, j}$ are $b_{\lambda, j} = (L^{-1})^\T v_j$.
Therefore, we have obtained a set of orthonormal functions with respect to the inner product $\langle \cdot,\cdot \rangle_{\lambda}.$ The $j$th smoothed principal component is then given by 
\begin{equation*}
z_j=A \mathcal{G} b_{\lambda, j} = A \mathcal{G} (L^{-1})^\T v_j.
\end{equation*}
Thus, the problem is reduced to the multivariate PCA of the matrix $A \mathcal{G} (L^{-1})^\T$ in $\mathbb{R}^{q}$  (see \cite{Aguilera13} for a detailed study). From the results in \cite{Ocana99,Ocana07} we  deduce in this paper  the expression of the smoothing operator $S$ that provides the equivalence between this multivariate PCA  and the functional PCA of the smoothed data $S(x_i)$ in $L^{2}(T).$  

\begin{Proposition} \label{Prop2}
Given the basis expansion (\ref{expan}) for a random sample $\{ x_i \}$ of curves in $L^{2}(T),$ the  PCA of the matrix $A \mathcal{G} (L^{-1})^\T$ with the usual inner product in $\mathbb{R}^p$ is equivalent to all FPCA in Proposition \ref{prop1} with  the operator $S^2$ defined as $S^2(f) = \phi(t)^\T (\mathcal{G}+\lambda P_{d})^{-1} \mathcal{G} \mathtt{f}$, with $f= \phi(t)^\T \mathtt{f}.$
\end{Proposition}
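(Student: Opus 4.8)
The plan is to reduce everything to finite-dimensional linear algebra on the B-spline coefficient space and then lift back to $L^2(T)$ via the correspondence $f = \phi^\T \mathtt{f}$. Writing $G_\lambda = \mathcal{G} + \lambda P_d$ (so that $LL^\T = G_\lambda$ is the Cholesky factorization), the operator in the statement acts on coefficients as the matrix $G_\lambda^{-1}\mathcal{G}$; I will denote this coefficient matrix by $\mathbf{S}^2$, so that $S^2 f = \phi^\T \mathbf{S}^2 \mathtt{f}$. First I would record the two Gram identities $\langle f,g\rangle = \mathtt{f}^\T \mathcal{G}\mathtt{g}$ and $\langle f,g\rangle_\lambda = \mathtt{f}^\T G_\lambda \mathtt{g}$, from which the crucial intertwining relation $\langle S^2 f, g\rangle_\lambda = \mathtt{f}^\T \mathcal{G} G_\lambda^{-1} G_\lambda \mathtt{g} = \langle f,g\rangle$ follows by a one-line computation. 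This single identity $\langle f,g\rangle = \langle S^2 f, g\rangle_\lambda$ is what converts the ordinary geometry into the penalized geometry and is the backbone of the argument; it also shows at once that $S^2$ is self-adjoint and positive definite with respect to both $\langle\cdot,\cdot\rangle$ and $\langle\cdot,\cdot\rangle_\lambda$.

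Next I would identify the multivariate side. The PCA of $W = A\mathcal{G}(L^{-1})^\T$ is the spectral analysis of its empirical covariance $n^{-1}W^\T W = L^{-1}\mathcal{G}\,\Sigma_A\,\mathcal{G}(L^{-1})^\T$, which is precisely the matrix appearing in the eigenproblem (\ref{eq:2}). Hence the orthonormal eigenvectors $v_j$ and the eigenvalues $\eta_j$ coincide, the coefficient vectors $b_{\lambda,j} = (L^{-1})^\T v_j$ are recovered, and the multivariate scores $Wv_j = A\mathcal{G}b_{\lambda,j} = z_j$ agree with the smoothed principal component scores. Using $LL^\T = G_\lambda$ one checks $b_{\lambda,j}^\T G_\lambda b_{\lambda,k} = v_j^\T v_k = \delta_{jk}$, so the $\gamma_{\lambda,j} = \phi^\T b_{\lambda,j}$ form an orthonormal system for $\langle\cdot,\cdot\rangle_\lambda$, exactly as required by Silverman's problem (\ref{eqsv}).

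The core step is then to verify that the operator $S$ obtained as the self-adjoint positive square root of $S^2$ reproduces all three decompositions of Proposition \ref{prop1} with these same scores. Since $\mathbf{S}^2 = G_\lambda^{-1}\mathcal{G}$ is similar to the symmetric positive definite matrix $G_\lambda^{-1/2}\mathcal{G}G_\lambda^{-1/2}$, it is diagonalizable with positive spectrum, so a unique principal square root $\mathbf{S} = G_\lambda^{-1/2}(G_\lambda^{-1/2}\mathcal{G}G_\lambda^{-1/2})^{1/2}G_\lambda^{1/2}$ exists; I would then check that $G_\lambda\mathbf{S}$ is symmetric, which makes $S$ self-adjoint for $\langle\cdot,\cdot\rangle_\lambda$. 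With the intertwining identity in hand the three verifications become short: for the first decomposition, $\langle S^2 x_i,\gamma_{\lambda,j}\rangle_\lambda = \langle x_i,\gamma_{\lambda,j}\rangle = a_i^\T\mathcal{G}b_{\lambda,j} = z_{ij}$, giving $S^2(x_i) = \sum_j z_{ij}\gamma_{\lambda,j}$; for the second, a computation reducing to $\mathbf{S}^\T\mathcal{G}\mathbf{S}^{-1} = \mathcal{G}$ yields simultaneously the orthonormality of $\beta_j = S^{-1}(\gamma_{\lambda,j})$ in $\langle\cdot,\cdot\rangle$ and the scores $\langle S x_i,\beta_j\rangle = z_{ij}$; and the third follows from the definition $\langle f,g\rangle_S = \langle Sf,Sg\rangle$ together with the second. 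Because every decomposition shares the scores $z_{ij}$ and eigenvalues $\eta_j$ produced by the multivariate PCA, the equivalence is established.

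I expect the main obstacle to be this third step, and specifically the bookkeeping needed to pass between the three metrics $\langle\cdot,\cdot\rangle$, $\langle\cdot,\cdot\rangle_\lambda$ and $\langle\cdot,\cdot\rangle_S$ while tracking which operator ($S$, $S^2$ or $S^{-1}$) is self-adjoint or isometric for which inner product. The existence of a clean square root hinges on the simultaneous-diagonalization structure of the pencil $(\mathcal{G}, G_\lambda)$, and the delicate point is confirming that the principal square root $\mathbf{S}$ is self-adjoint for $\langle\cdot,\cdot\rangle_\lambda$ rather than merely for the Euclidean structure on $\mathbb{R}^p$; once that is secured, the remaining identities are routine matrix algebra built on the intertwining relation.
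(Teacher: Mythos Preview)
Your proposal is correct and takes a genuinely different route from the paper's own proof. The paper does not verify the formula for $S^2$ directly; instead it \emph{derives} it. It first invokes an external result (Proposition~2 of Oca\~na et al., 2007) stating that the multivariate PCA of $AD^\T$, with $D=L^{-1}\mathcal{G}^\T$, coincides with the FPCA of $x_i$ under the inner product $\langle f,g\rangle_K=\mathtt{f}^\T D^\T D\,\mathtt{g}$. Then, writing $S^2(f)=\phi^\T C\mathtt{f}$ for an unknown matrix $C$, it uses the relation $\langle S^2 f,S^2 g\rangle_\lambda=\langle f,g\rangle_K$ from Proposition~\ref{prop1} to obtain $C^\T G_\lambda C=D^\T D$, hence $L^\T C=RD$ for some orthogonal $R$; finally, the requirement $\gamma_{\lambda,j}=S^2(f_j)$ forces $R=I_p$, yielding $C=(L^{-1})^\T D=G_\lambda^{-1}\mathcal{G}$.

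By contrast, you take the stated $S^2$ as given and organize the whole argument around the single intertwining identity $\langle S^2 f,g\rangle_\lambda=\langle f,g\rangle$, from which the three decompositions of Proposition~\ref{prop1} and their agreement with the eigenproblem~(\ref{eq:2}) fall out by short computations. Your approach is self-contained (no appeal to \cite{Ocana07}) and arguably cleaner; it also makes explicit the construction and self-adjointness of the square root $S$, which the paper leaves implicit. What the paper's route buys is an explanation of \emph{why} the formula $G_\lambda^{-1}\mathcal{G}$ is the only possible one: the orthogonal freedom $R$ is pinned down by the compatibility with the penalized eigenfunctions, a uniqueness insight that your verification does not display. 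Your anticipated obstacle---checking that the principal square root $\mathbf{S}$ is self-adjoint for $\langle\cdot,\cdot\rangle_\lambda$ (equivalently, that $G_\lambda\mathbf{S}$ is symmetric)---is real but resolves easily once one observes that $\mathbf{S}^2=G_\lambda^{-1}\mathcal{G}$ is diagonal in the basis that simultaneously $G_\lambda$- and $\mathcal{G}$-orthogonalizes $\mathbb{R}^p$, so its principal square root is self-adjoint in both metrics at once.
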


\begin{proof}
Define, for all $f= \phi(t)^\T \mathtt{f}, \ g= \phi(t)^\T \mathtt{g},$ the new inner product $\langle f,g \rangle_K = \mathtt{f}^\T K  \mathtt{g}$ where  $K= D^\T D,$ with $D = L^{-1} \mathcal{G}^\T.$ Proposition 2 in \cite{Ocana07} proved that the  PCA of matrix $A D^\T$ with the usual inner product in $\mathbb{R}^p$  is equivalent to FPCA of $x_i$ with respect to $\langle \cdot,\cdot\rangle_K.$ That is, 
$x_i = \sum_j z_{ij} f_j$ with $f_j = \phi^\T  D^{-1} v_j,$
with $v_j$ being the eigenvectors of the matrix $A D^\T.$ Then, from Proposition \ref{prop1} in this paper, we have that $\langle S^2(f),S^2(g)\rangle_\lambda = \langle f,g\rangle_K.$ If we suppose that there exists a matrix $C$ such that $S^2(f) = \phi^\T C \mathtt{f},$ then  
$\langle S^2(f),S^2(g)\rangle_\lambda = \mathtt{f}^\T C^\T (\mathcal{G}+\lambda P_{d}) C \mathtt{g} = \mathtt{f}^\T D^\T D \mathtt{g}.$ As a consequence, $C^\T L L ^\T C = D^\T D$, so that $L^\T C = RD$ with $R$ being an orthonormal matrix $(RR^\T = I_p).$ Therefore, $S^2 (f) = \phi^\T \{(L^{-1})^\T R D \} \mathtt{f}.$ On the other hand, from Proposition \ref{prop1}, we have that $\gamma_j = S^2 (f_j)$ which implies that $(L^{-1})^\T v_j = (L^{-1})^\T R D D^{-1} v_j.$ As a consequence we obtain that $R=I_p$ and $S^2 (f) =   \phi^\T \{(L^{-1})^\T D\} \mathtt{f} =  \phi^\T \{(\mathcal{G}+\lambda P_d)^{-1} \mathcal{G}\} \mathtt{f}.$ 
\end{proof}

As a result, the principal components (scores) of $S(x_i)$ are given by $Z =  A \mathcal{G} (L^{-1})^\T V\,$ where $V$ is the matrix whose columns are the eigenvectors $v_j$ verifying Equation (\ref{eq:2}), and thus the eigenfunctions $\beta_j$ are $\beta_j = S^{-1} (\gamma_{\lambda,j}).$

Having estimated the weight functions coefficients and principal components scores, assume next that the smooth principal component expansion in (\ref{FK-L}) is truncated at the \emph{q}-term. Then, the column vector of smoothed sample curves is given by  $\x^{q}\left(t\right)= Z^q \beta \left(t\right),$ where  $Z^q=( z_{ij}) \in\mathbb{R}^{n\times q}$ is the matrix whose columns are the first  $q$ principal components   scores with respect to the basis of smooth principal component weight functions $\beta (t) = (\beta_1(t), \dots, \beta_q(t))^\T.$ 

With the above results, the functional independent components are computed from the smoothed principal component approximation of functional data. 
Following the ICA pre-processing steps, we first standardize the approximated curves defining the whitening operator as
$
    \Psi\{\x^{q}(t)\}=\tilde{\x}^{q}(t)= \tilde{Z}^q \beta (t),
$
with $\tilde{Z}^q = Z^q \Sigma_{Z^q}^{-\mathrm{1/2}}$ being the matrix of standardized principal components and 
$\Sigma_{Z^q }^{-1/2} = \sqrt{n} \{(Z^q)^\T Z^q\}^{-1/2},$ the inverse square root of the covariance matrix of $Z^q$. The described whitening transformation is essentially an orthogonalization of the probabilistic part of $\x^q$, so the matrix $\tilde{Z}^q \in\mathbb{R}^{n\times q}$ naturally satisfy $\Sigma_{\tilde{Z}^q}=I_{q}$, and the associated covariance operator $\mathcal{C}_{\tilde{x}^{q}}$ is unitary.

Then, the  kurtosis operator (\ref{FFobi}) of the standardized curves $\tilde{\x}^{q}(t)$ is given in matrix form by
\begin{equation*}
 \mathcal{K}_{\widetilde{\x}^{q}}(h) = \frac{1}{n} (\widetilde{Z}^{q^\T} D_{\widetilde{Z}^q} \widetilde{Z}^q \mathtt{h} )^\T \beta (t), \ \ \forall h = \beta(t)^\T \mathtt{h},
\end{equation*}
where $D_{\tilde{Z}^q} = \mathrm{diag}(\widetilde{Z}^q \widetilde{Z}^{q^\T}).$
The eigenanalysis of this kurtosis operator leads to the  diagonalization of the kurtosis matrix of the standardized principal components $\widetilde{Z}^q$, 
\begin{equation} \label{eq:3}
	\Sigma_{4,\widetilde{Z}^q} u_{l}= \rho_{l}u_{l} \quad (l=1,\ldots,q),
\end{equation}
where $\Sigma_{4,\widetilde{Z}^q}\in\mathbb{R}^{q\times q}$ is defined as
\[\Sigma_{4,\widetilde{Z}^q} = \frac{1}{n}\sum_{i=1}^{n}\left\Vert \tilde{z}_{i}^q\right\Vert ^{2}\tilde{z}_{i}^q\tilde{z}^{q^\T}_{i} =   \frac{1}{n}\widetilde{Z}^{q^\T} D_{\widetilde{Z}^q} \widetilde{Z}^q,
\]
with $\tilde{z}_i^q$ being the column vector $q\times 1$ with the $i$th row of the matrix $\tilde{Z}^q.$ The eigenproblem (\ref{eq:3}) is not restricting to assume that $\Sigma_{4,\widetilde{Z}^q} $ is uniquely determined. In fact, other kurtosis matrices can be considered (see, e.g., \cite{Kollo08,Loperfido17}). This way, the P-spline smoothed functional ICA of $x$ in $L^2(T)$ is obtained from the multivariate ICA of $Z^q$ in $\mathbb{R}^{q}.$ The resulting weight functions are now $\psi_{l}(t)=\beta(t)^\T  u_{l}\ (l=1,\ldots,q),$ where the coefficients vectors $u_l$ are the eigenvectors of the predefined kurtosis matrix. Then, the independent components can be calculated as $\zeta_{l,\tilde{\x}^q}=\tilde{Z}^q u_l.$  Finally, the  operator $\Gamma$ defining the FICA model is
\[\Gamma(\x_i^q) =  \beta^\T U^\T \Sigma_{Z^q}^{-1/2} z_i^q,
\]
 with $z_i^q$ being the column vector $q\times 1$ with the $i$th  row of $Z^q$ and $U\in \mathbb{R}^{q \times q}$ the matrix of  eigenvectors of the kurtosis matrix $\Sigma_{4,\widetilde{Z}^q}.$

\section{Parameter tuning}
\label{sec4}
The problem concerning the estimation of the smoothed independent component curves lies in finding an optimal truncation point $q$, as well as a suitable penalty parameter. As $q$ approaches $p$, more of the higher oscillation modes of the standardized sample are induced in the estimation. Otherwise, we are denoising the data from its second and fourth-order structure simultaneously. From this perspective, it is desirable to increase the value of $q$ such that the latent functions of the whitened space can be captured by the kurtosis operator. 
Observe that this kind of regularization is not exactly the same as the one providing the P-spline penalization of the roughness of the weight functions. Attenuating the higher frequency components of the FPCA model does not necessarily affect an entire frequency bandwidth of the data. Thus, if the original curves are observed with independent error, and the error is persistent in the functional approximation, it may overlap the estimation of the kurtosis eigenfunctions. In this context,  smoothing would be appropriate. Once the value of $q$ is decided, we should examine those components with extreme kurtosis, contrary to the FPCA where only the components associated to large eigenvalues are considered.

\subsection{Penalty parameter selection}
\label{Penalty}
Leave-one-out cross-validation \citep{Rice91} is generally used to select the penalty parameter in order to achieve a suitable degree of smoothness on the weight functions, but also to induce the truncation point $q$. In a more explicit and condensed form, this procedure in our model lies in finding a value of $\lambda$ that minimizes 
\begin{equation} \label{CV}
   \textsc{cv}_{q}(\lambda)=\frac{1}{n}\sum_{i=1}^{n}\left\Vert x_{i}-\x_{i}^{q(-i)}\right\Vert ^{2},
\end{equation}
where $\x_{i}^{q(-i)}=\sum_{l=1}^{q}z_{il}^{(-i)}\beta_{l}^{(-i)}(t)$ is the reconstruction of the $i$th curve  $x_i$ in terms of the $q$ first smoothed principal components by leaving out it in the estimation process. We found, however, that cross-validation was not sensitive for a reasonably large basis dimension, forcing us to reformulate the strategy. To address this problem, the penalty parameter might be subjectively chosen although this can lead to the bias and poor extraction of the artifactual sources. Hence, for the results presented in this paper, we propose a novel adaptive approach which consists in replacing (\ref{CV}) by
\begin{equation} \label{baselineCV}
   \textsc{bcv}_{q}(\lambda)=\frac{1}{n}\sum_{i=1}^{n}\left\Vert \x_{i}^{q; \lambda(-i)}-\x_{i}^{q;\lambda+\ell(-i)}\right\Vert ^{2},
\end{equation}
where $\x_{i}^{q;\lambda(-i)}$ is a smoothed representation of $x_i$ for some $\lambda$ and $\ell>0$ a value that increases the penalty in the second term of the norm, assume $\ell=0.1$. Then, for a fixed $q$,  (\ref{baselineCV}) is iterated for each $\lambda$ in  a given grid to find the one that minimizes $\textsc{bcv}_{q}(\lambda)$.  Among all the $q$ considered in the estimation process, we select the truncation point that minimizes this function.

If we require a basis dimension $p$ greater than sample size $n,$ a shrinkage covariance estimator \citep{Shafer05} can be considered for computing $\Sigma_A$. This method guarantees positive definiteness and consequently an estimation of the higher and important eigenvalues not biased upwards. The same strategy is used for $\textsc{bcv}_{q}(\lambda)$. Recall the quadratic distances in (\ref{baselineCV}). These are given in terms of basis functions by

{\small
\begin{gather*}
\left\Vert \x_{i}^{q;\lambda(-i)}-\x_{i}^{q;\lambda+\ell(-i)}\right\Vert ^{2}=\int_{T}\left[\x_{i}^{q;\lambda(-i)}(t)-\x_{i}^{q;\lambda+\ell(-i)}(t)\right]^{2}\text{d}t=\\
=\int_{T}\left[\sum_{l=1}^{q}z_{il}^{\lambda(-i)}\sum_{j=1}^{p}\mathtt{b}_{lj}^{\lambda(-i)}\phi_{j}(t)-\sum_{l=1}^{q}z_{il}^{\lambda+\ell(-i)}\sum_{j=1}^{p}\mathtt{b}_{lj}^{\lambda+\ell(-i)}\phi_{j}(t)\right]^{2}\text{d}t=\\=\int_{T}\left[\sum_{j=1}^{p}e_{ij}\phi_{j}(t)\right]^{2}\text{d}t=e^\T_{i}\mathcal{G} e_{i},
\end{gather*}
}where  $\mathtt{b}_{j}=(\mathtt{b}_{j1}, \ldots, \mathtt{b}_{jp})^{\T}$ is the vector of basis coefficients of the $j$th weight function $\beta_j$ in the B-spline basis $\phi_j(t)$ and $e_{i}=(e_{i1},\ldots,e_{ip})^\T$ is a vector of residuals. Next, the matrix $\mathcal{E}=(e_{ij})\in\mathbb{R}^{n\times q}$ is reconstructed via shrinkage. That is, first we compute cov$_{S}(\mathcal{E})$ where cov$_{S}$ is a predefined shrinkage covariance estimator, then we apply Cholesky decomposition of the form $LL^\T=\operatorname{cov}_{S}(\mathcal{E})$. Finally, the basis coefficients of the reconstructed residual functions are $\hat{e}_{i}=(L^{-1})^\T e_{i}$, and consequently now
\begin{equation*} 
\textsc{bcv}(\lambda)_{q}=\frac{1}{n}\sum_{i=1}^{n}\left\Vert \x_{i}^{q;\lambda(-i)}-\x_{i}^{q;\lambda+\ell(-i)}\right\Vert ^{2}=\hat{e}^\T_{i}\mathcal{G}\hat{e}_{i}.
\end{equation*}

We call this method \emph{baseline cross-validation}, as it operates across different reconstructions of $x_i$ for a given baseline penalty parameter and a fixed $q$. This approach is more versatile and particularly useful when the original curves are extremely rough and approximated with a larger basis dimension, thus avoiding the least squares to collapse. Moreover, for a given $q$, it allows to score more than one $\lambda$ as a result of the various relative minima it produces. The intuition behind baseline cross-validation is that there are several smoothing levels to endow the estimator with the ability for predictive modelling. These are given at evaluating "short distances" for a smoothing baseline $\lambda$ in a given $\x^q_i$, which may be seen as a way of finding a trade-off for the global roughness of a $q$-dimensional basis. Note that, as the value of $q$ increases, and despite the minimization of the mean squared error, it may be more difficult to find a smoothing balance between the elements of the basis  due to a complex fabric of variability modes.

\begin{algorithm}[H]
\scriptsize
\floatname{algorithm}{Algorithm}
\renewcommand{\thealgorithm}{}
\caption{\textsc{baseline cross-validation}}
\label{protocol1}
\textbf{Input}: $A, \phi_j\ (j=1,\ldots,p), \mathcal{G}, P_{2}, \lambda_k=(\lambda_1,\ldots,\lambda_m)^\T$\\
\textbf{Output}: $\lambda^\bullet.$\\
\textbf{for each}  $\lambda$ in $\lambda_k$:
\begin{algorithmic}[1]
\STATE Calculate $L^{-1}$ via Cholesky decomposition of the matrix $\mathcal{G}+\lambda P_{2}=L L^\T$ and for $\mathcal{G}+(\lambda+\ell) P_{2}=L L^\T$.
\STATE Diagonalize $L^{-1}\mathcal{G} \Sigma_{A_s}\mathcal{G} (L^{-1})^\T,$ where $\Sigma_{A_s}=\ $cov$_S(A)$, to obtain the coefficients of the eigenfunctions $\beta_j$, $\mathtt{b}_{j}$ and $\mathtt{b}_{\ell,j}$ for the incremental smoothing case .
\STATE Calculate $Z^q=A^{\T}\mathcal{G} b_{j}$, $Z^q_{\ell}=A^{\T}\mathcal{G}\mathtt{b}_{\ell,j}$ and $\mathcal{A}=\mathtt{b}_{j}(Z^q)^{\T}$, $\mathcal{A}_\ell=\mathtt{b}_{\ell,j}(Z^q_\ell)^{\T}$, where $\mathcal{A},\mathcal{A}_\ell$ are the coefficient matrices of the smoothed principal component expansion in terms of $\phi_j$.
\STATE $\mathcal{E}=\mathcal{A}-\mathcal{A}_\ell$ and reconstruct $\mathcal{E}$ via the covariance matrix cov$_{S}(\mathcal{E}).$\STATE  \textsc{bcv}$(\lambda)=n^{-1}$tr$(\hat{\mathcal{E}}^{\T}\mathcal{G}\hat{\mathcal{E}})$, where $\hat{\mathcal{E}}$ is the reconstructed matrix of residual coefficients and tr($\cdot$) is an operator that sums the diagonal elements of a square matrix.
\end{algorithmic}
\textbf{end for} \\
$\lambda^\bullet\leftarrow$ argmin$_{\lambda}$\textsc{bcv}.
\end{algorithm} 

 \section{Simulation study}
\label{sec5}
A simulation study based on  EEG data segments containing stereotyped artifacts was conducted to validate our methods for recovering brain sources. The data consists of 4 separate 64-channel recordings of a subject performing the following classes of self-paced repetitive movements: nodding, hand-tapping with a wide arm movement, eye-blinking and chewing. Recordings were performed in absence of sensory stimulation in a trial length 3 seconds sampled at 1 kHz, i.e., $t_{ik}\ (i=1,\ldots,64; k=1,\dots,3000)$. More details on the preprocessing steps and experimental conditions are deferred to the online supplementary material. In reconstructing the functional form of the sample paths, we sought a less smooth fitting to mimic the brain potential fluctuations. Accordingly, a basis of cubic B-spline functions of dimension $p=230$ is fitted to all signal components minimizing the mean squared error to a negligible value. 

The process of identifying artifactual functions is addressed by using topographic maps that roughly represent patterns of eigenactivity related to the distribution of bioelectric energy on the scalp. These maps are elaborated from the projection of the signal components $x_1, \ldots, x_{64}$ on to the basis of independent weight functions, i.e. $\zeta_{il,x}=\langle x_{i},\psi_{l}\rangle\ (i=1,\ldots,64; l=1,\ldots,q)$, whose resulting score vectors $\zeta_{l,\x}=\left(\zeta_{1l},\ldots,\zeta_{nl}\right)^\T$ are depicted in the spatial electrode domain. Therefore, the aim is to examine how the kurtosis eigenfunctions contribute into $x_i$ to discern possible patterns of artifactual activity. The components identified as artifacts will be considered for subtraction.

In order to simplify the burden of a manual selection, assume that all $\psi_1,\ldots,\psi_q$ obtained from the model correspond to a structure of latent artifactual eigenpatterns. Moreover, let $\x_{i}^{q}(t)=\sum_{l=1}^{q} \zeta_{il,x}\psi_{l}(t)$ be an expansion of artifactual components and related artifactual eigenfunctions.  Then, the artifact subtraction in terms of basis expansions is

\begin{equation} \label{subtractionIC}
\begin{split}
   x_{i}(t)-\x_{i}^{q}(t) &=\sum_{j=1}^{p}a_{ij}\phi_{j}(t)-\sum_{l=1}^{q}\zeta_{il,x}\sum_{j=1}^{p} (u^\T_l \mathtt{b}_{j}) \phi_{j}(t)=\\
   &=\sum_{j=1}^{p}d_{ij}\phi_{j}(t),
   \end{split}
\end{equation}
where $d_{ij}$ are the cleaned (or residual) coefficients, with $u_l$ being the vector of coefficients of the independent weight function $\psi_l$ in terms of the principal eigenfunctions. 
Thus, given the model parameters $q$ and $\lambda$, the procedure to estimate and remove smooth artifactual components from EEG functional data can succinctly be derived  as follows:

\begin{algorithm}[H]
\scriptsize
\floatname{algorithm}{Algorithm}
\renewcommand{\thealgorithm}{}
\caption{\textsc{functional artifact subtraction}}
\label{protocol1}
\textbf{Input}: $A, \phi_j\ (j=1,\ldots,p), \mathcal{G}, P_{2}, \lambda, q$\\
\textbf{Output}: $d_j.$
\begin{algorithmic}[1]
\STATE Calculate $L^{-1}$ via  Cholesky decomposition of the matrix $\mathcal{G}+\lambda P_{2}=L L^{\mathrm{T}}$.
\STATE Perform the PCA of $A\mathcal{G} (L^{-1})^\T$. Obtain $Z^q$ and the coefficients $\mathtt{b}_j$ of $\beta_j$. \\
$\rightarrow\qquad$ if  $p>n$ then diagonalize  $L^{-1}\mathcal{G} \Sigma_{A_s}\mathcal{G} (L^{-1})^\T,$ where $\Sigma_{A_s}=\ $cov$_S(A).$
\STATE Whiten $Z^q$: i.e. $\tilde{Z}^q=Z^q \Sigma_{Z^q}^{-1 / 2}$.
\STATE Fix a fourth-order matrix $\Sigma_{4,\widetilde{Z}^q}$ and diagonalize it. Obtain the eigenvalues $\rho_{l}$ and associated eigenvectors $u_{l}\ (l=1,\ldots,q)$.
\STATE  Calculate  $\zeta_{i l, x}=\left\langle x_{i}, \psi_{l}\right\rangle$ for $\psi_{l}(t)=\sum_{j=1}^{q} u_{l j} \beta_{j}(t)$.
\STATE Select the artifactual score vectors in $\zeta_{l, x}.$ Expand the artifactual space as $\x_{i}^{q}(t)=\sum_{l=1}^{q} \zeta_{il,x}\psi_{l}(t)$.
\STATE Subtract the artifactual coefficients in terms of $\phi_j$ using (\ref{subtractionIC}) and obtain the vector of coefficients $d_j$ to reconstruct the functional brain signal.
\end{algorithmic}
\end{algorithm} 
Baseline cross-validation was performed on a given grid, selecting the value which minimizes $\textsc{bcv}_q(\lambda)$ for $q=1,\dots,j_0$ where $j_0$ is defined as  the index entry corresponding to the first relative maximum of the    first order differences of FPCA's eigenvalues  $\Delta\eta_{j}.$ We find that truncating at $q=j_0$ is a way of exploring independence in the high variability structure of the data. In analysing EEG signals, this entails major effectiveness at reducing the artifactual content to a few eigenfunctions, particularly for the low-frequency physiological activity such as blinks and movement-related artifacts. One may see this truncation rule as a measure to improve the accuracy in the estimation of certain artifacts, while preserving the modes of variability related to the rhythms of the latent brain processes. The log-distances using \textsc{bcv}$(\lambda)$ for each one of the datasets are shown in Figure \ref{figl}. Further results are presented in Table \ref{Table1}.

\begin{figure}[h!]
\centering\includegraphics[scale=.52]{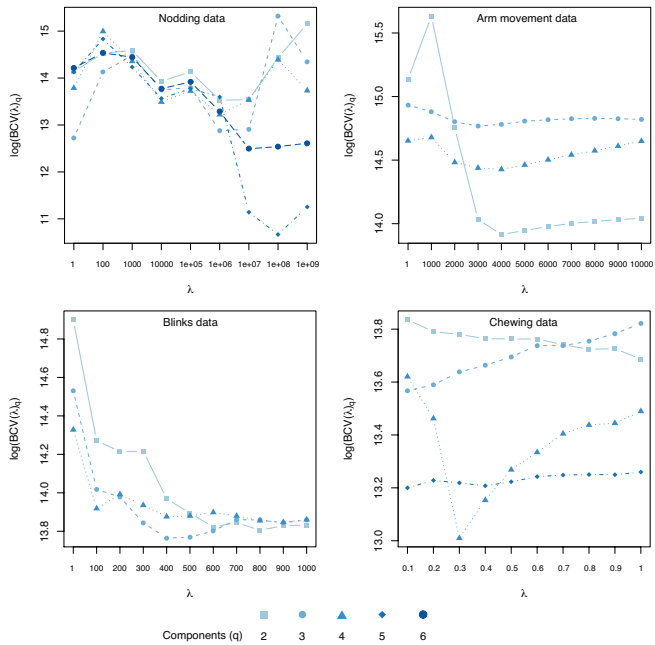}
\caption{The estimated log-\textsc{bcv}$(\lambda)$ function for the first components of each EEG dataset containing different classes of artifacts. }
\label{figl}
\end{figure}

\begin{table}[h!]
\caption{\label{Table1}Summary of parameters and cumulative variance  of the FICA model.}
\begin{tabular}{lccccccc}
\toprule
	& $j_0$ & $q$ & $\lambda$ & log-\textsc{bcv}$(\lambda)$ & var (\%) & var (\%) \\ 
Trial &&&&&$\lambda$&$\lambda=0$
 \\
\midrule
Nodding & 6 & 5 & $10^8$ & 10.66 & 99.40 & 94.43 \\ 
Arm mov. & 4 & 2 & 4000 & 13.91 & 75.85 & 62.42 \\  
Blinks & 4 & 3 & 400.0 & 13.76 & 97.50 & 93.56\\ 
Chewing & $5$ & $4$ & 0.300 & 13.01 & 68.23 & 68.03 \\
\botrule
\end{tabular}
\end{table}

\begin{figure*}[h!]
\centering\includegraphics[scale=.45]{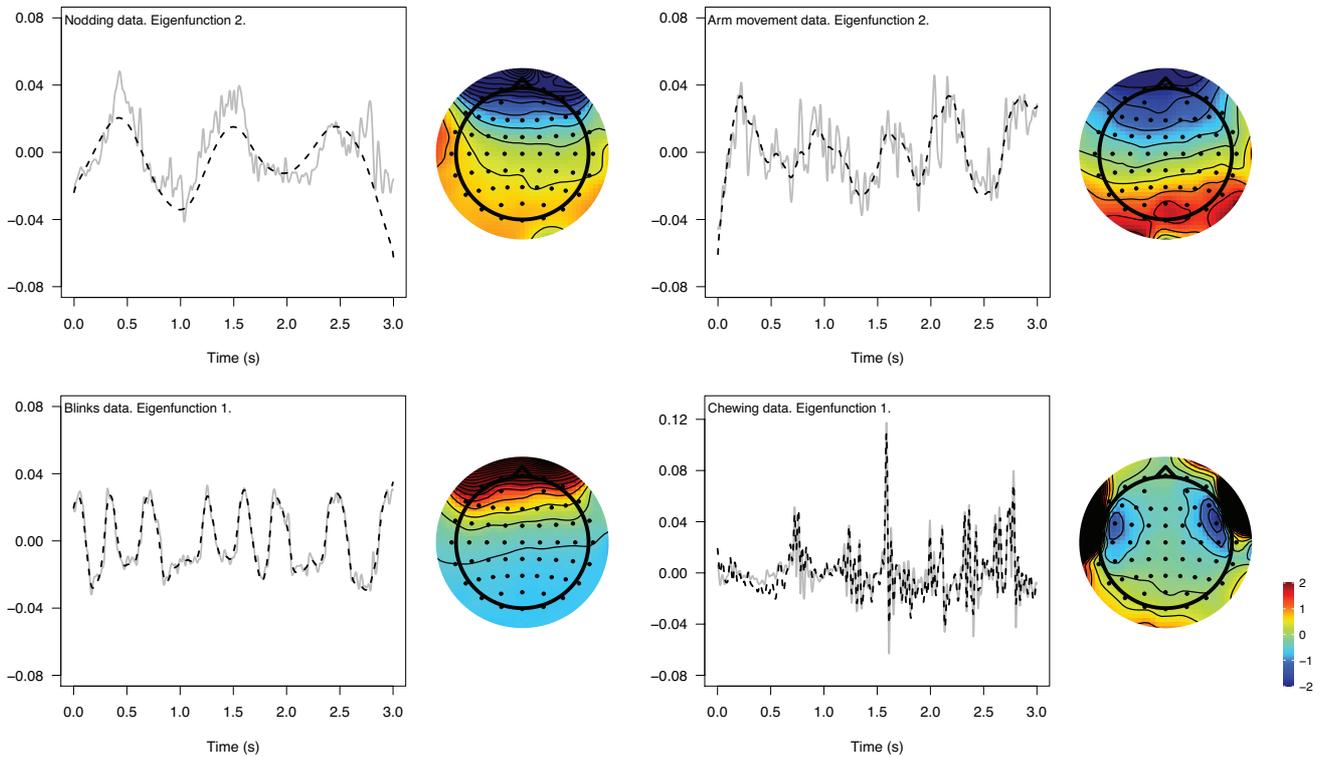}
\caption{Artifactual eigenfunctions selected from each trial. The unpenalized FICA (grey) and P-spline smoothed FICA (black dashed) decompositions are compared.  The scalp maps represent the scores depicted in the spatial electrode domain obtained by projection of the smooth eigenfunctions in the original sample.}
\label{Fig1}
\end{figure*}

\begin{figure*}[h!]
\centering\includegraphics[scale=.45]{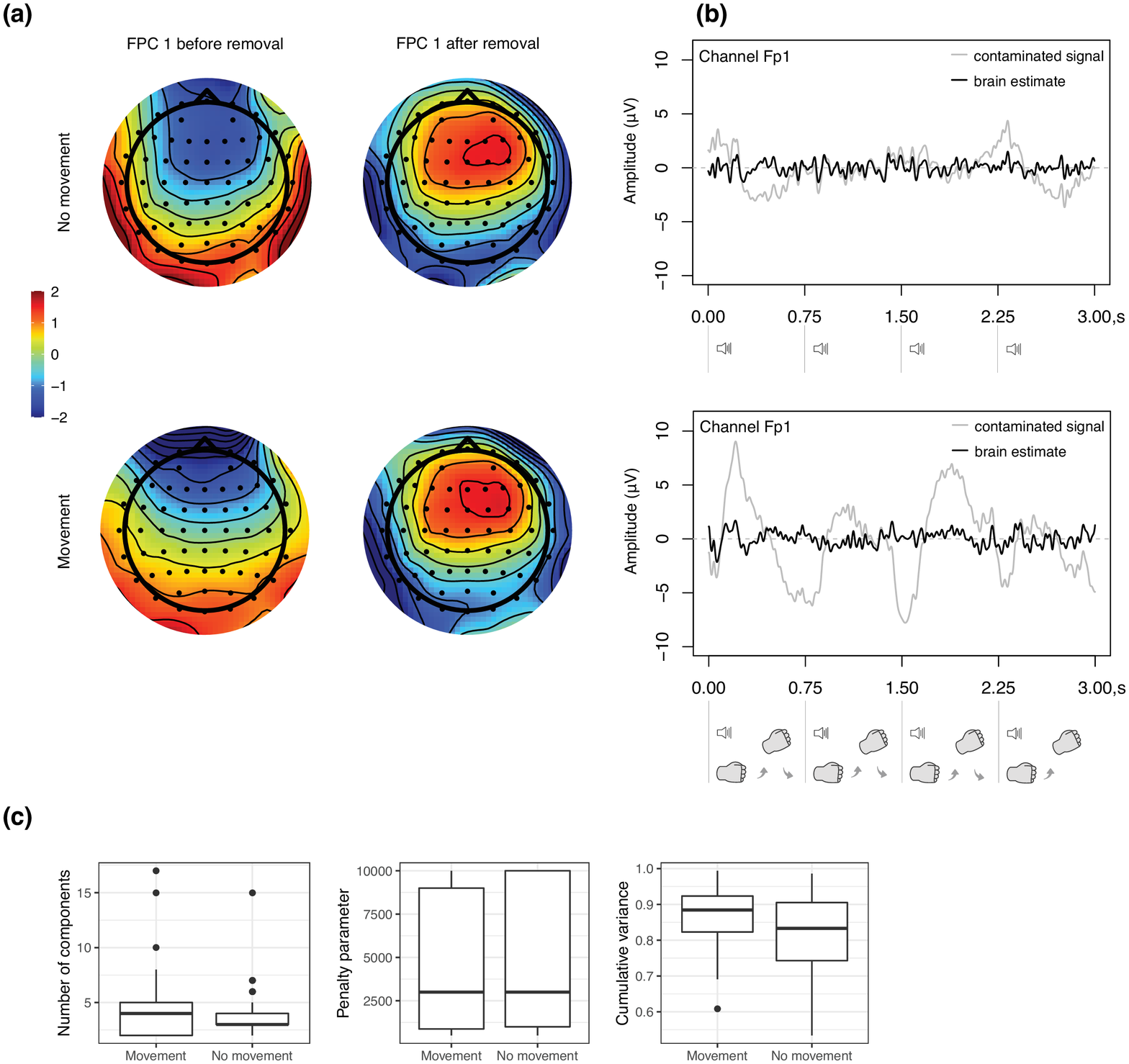}
\caption{(a) Topographic maps representing the leading functional principal component of the averaged trials before performing the P-spline smoothed FICA and after. (b) Grand-average across trials of a prefrontal channel where the artifactual activity is revealed. A descriptive scheme of the  movement is provided at the bottom of the plots. (c) Box plots of the number of components, the selected penality parameter and the cummulative variance of the model.}
\label{Fig3}
\end{figure*}

Preliminary results comparing both penalized and non-penalized estimation show that the smoothed FICA presumably attenuates the high-frequency potentials of neural origin, revealing the latent shape of the artifact. More importantly, however, is that all topographic maps reflect well-known spatial activation of the artifactual content. A selection of eigenfunctions from each trial and their associated component scores are depicted in Figure  \ref{Fig1}. Physiological non-brain activity near the recording zone, such as blinks and large amplitude body movements, can be easily detected in controlled conditions using the proposed methodology. However, the coexistence of such artifacts may result in a non-linear distortion of them, e.g., via large changes of the impedance \cite{Noury16}. This could entail a more challenging situation, as algorithms based on linear mixing may not be that effective at a certain point. Nonetheless, the aforementioned artifacts enhance the role of smoothing due to their low-frequency trademark in the signal. In contrast, when artifacts are characterised by localised high-amplitude curves, as it is the case of the fourth artifactual eigenfunction (chewing), smoothing is not able to denoise  effectively. We believe this happens for two reasons: first, the noise provided by the fourth-order structure of the model is essential to configure the shape of the artifact; second, the B-spline basis has a limited flexibility to smooth abrupt local contours. Hence, artifacts such as jaw clenching and chewing  are quite sensible to smoothing and difficult to correct for subtraction. Interestingly, hybrid procedures combining spline interpolation and wavelet filtering have shown promising results trying to solve this problem in functional near-infrared spectroscopy research (see \cite{Novi20}).

It seems reasonable to conjecture that restricting $q$ to the first FPCA terms decreases the odds of getting spurious artifactual functions, as they represent dominant modes of variability usually related to large artifacts. In such cases, the artifact subtraction with the smoothed components preserved  the brain activity rhythms in the original form, while for $\lambda=0$ it caused a reduction and a distortion of relevant potentials. However, \textsc{bcv}  may tend to oversmooth slightly in a sense of an effective artifact removal, resulting in certain artifactual residue after subtraction. This happens due to the  complexity of the mixed sources and can be solved by examining other relative minima in our results. The plots for all channels and datasets comparing the effect of subtracting artifactual components are omitted for the sake of space. Online supplementary materials provide \texttt{R} code for its visualization.

Although our tests have provided good results by subtracting all smoothed components, further research is needed to corroborate their physiological validity. As reported in \cite{Artoni18}, reducing the dimensionality of the data with a PCA before applying ICA is not always beneficial, although in some cases may improve the signal-to-noise ratio of the large sources and their subsequent isolation. We see that our approach paves the way for developing measures of correlation, dipolarity, stability or sparsity in the functional data domain to fine-tune artifact selection. An important issue that remains open is whether the restriction imposed for the truncation point is beneficial or not to achieve better results.

\section{Estimating brain signals from contaminated event-related potentials}
\label{sec6}
To illustrate our methods, we reproduced a typical experimental scenario where a human participant had to perform full-arm movements synchronised to a periodic auditory stimulus. An EEG recording was performed during the task. Arguably, what we provide here is a paradigmatic example wherein the researcher needs to clean the signal from motion-related artifacts while preserving activity genuinely related to perceptual and motor brain processes. The subject was instructed to tap his hand on the table synchronizing with a steady auditory stimulus in one condition while listening to the same stimulus without any movement involved in the other. Disposing of a baseline, we could directly compare the outcome of our cleaning procedure with an uncontaminated experimental situation. We recorded 100 trials of 3 seconds per condition, divided into randomized blocks of 25 trials. The stimulus period was $750$ milliseconds, i.e., 4 tappings in one of the conditions. Movements were intentionally exaggerated to maximize eventual movement-related artifacts.  In this section, the same configuration for running the model ($p=230; i=1,\ldots, 64; k=1,\ldots,3000 $) is preserved from the previous one.

The P-spline smoothed FICA is performed at each trial to obtain brain estimates by subtracting the artifactual components. Here, the complexity of the signal increases as it is assumed a mixture of artifacts and other brain processes due to the cognitive task.  Figure \ref{Fig3}  shows the grand-averaged results comparing both conditions before and after the artifact removal. A FPCA is performed on the averaged data to visualize the spatial distribution of the scores in the direction of the leading eigenvector  before and after the removal. As expected, the activations where nearly coincident after the artifact removal and more prominent in the central region of the scalp. The upper left panel displays the EEG signal in some frontal channels where the movement-related artifact is prominently visible before the subtraction. Further evidence of such artifactual content is given in  second row  where the raw curves are shown in the other condition. Clearly, the pooled artifacts across the trials have here a different origin. The same panel shows the curves after subtracting the artifactual curves. 

Our procedure notably reduces the movement-related artifact and renders the signal more stationary. Indeed, differences are smaller in the non-movement condition but, in either case, our algorithm is capable of reducing artifactual content while retaining the brain activity intact. From our previous tests, one may expect some artifact residue at a trial level depending on the estimated $\lambda$ and the diversity of source artifacts. We stress that as the response to the repeated stimulus is assumed to be invariant and small in terms of amplitude, averaging suppresses non-phase-locked activity and reveals the potential elicited by the stimulus \citep{Tong2009QuantitativeEA}. Consequently,  the attenuation of the roughness of the artifactual component functions will lead to a better estimation of brain potentials at averaging rather than the subtraction of rough components.

\section{Discussion}
\label{sec7}

The proposed independent techniques are, to the best of our knowledge, the first to provide a functional framework for smoothed artifact extraction and removal of dense data approximated with a large number of knots. We found that using shrinkage estimators is a reasonable starting point for smoothing covariance operators with this kind of functional data (see also \cite{Ieva2016}). According to this setting, a novel cross-validation method has been proposed for selecting the model parameters. Despite being computationally expensive, our approach has proven to outperform the lack of sensitivity of other existent methods. Overall, this allows the application of independent component techniques from a smoothing perspective somewhat more flexible when compared to other modelling strategies.

Although \cite{Li16} established a form of Fisher consistency for the kurtosis operator decomposition, no asymptotic results of the non-smoothed and, hence of the smoothed independent components have been derived. Therefore, one can assume that we rely on a competitive performance derived from previous FPCA asymptotic results. In our empirical setting, however, the study of such properties must be related to the functional data type and the penalized spline method used, involving considerably more technicalities. See, for example, \cite{Zhang16} and \cite{Xiao19}. These theoretical developments lie beyond the scope of the present work. However, we hope to pursue such study in a separate paper. 

In our simulations, the kurtosis operator has proven to work well at capturing artifactual eigenfunctions with different frequency characteristics, at least under certain conditions. One of the strengths of our model is the double regularization, which allows us to circumvent the leak of brain activity and get clean movement-related artifacts. In essence, the degree of separation is defined through the space dimension, from more dependent (first $q$ terms of the FPCA decomposition) to more independent ($q \rightarrow p$). Thus $q$ acts as a regularization parameter to explore the variational component of the artifactual sources in the EEG signal, while $\lambda$ provides more accurate estimations, particularly in using the first $q$ terms of the K-L expansion. Further research is needed to determine how the model parameter selection can optimize the removal of artifacts with a minimum loss of variance patterns related to brain sources. Non-linear artifact distortion will inevitably suffer from cortical entrainment of challenging correction, suggesting the exploration of other subspaces prone to kurtosis data structures in addition to the smoothed principal component eigendirections.
\\
\\
\\
{\sf \bf{Author Contributions:}} 
Conceptualization, methodology, software and formal analysis, M.V. and A.M.A.; writing--original draft preparation, M.V.; writing--review and editing, M.V., M.R. and A.M.A; visualization, M.V.; supervision, A.M.A.; data collection and preprocessing M.V. and M.R. All authors have read and agreed to the published version of the manuscript.
\\
\\
{\sf \bf{Funding:}} 
This research is supported by the Methusalem funding from the Flemish Government and by the project MTM2017-88708-P of the Spanish Ministry of Science, Innovation and Universities, project FQM-307 of the Government of Andalusia (Spain).
\\
\\
{\sf \bf{Acknowledgments:}}
The authors wish to thank Marc Leman for its valuable comments and Daniel Gost for helping with figures and formatting.

\newpage

\bibliographystyle{apalike-dashed}

\bibliography{refs}


\end{document}